\documentclass[11pt]{article}

\usepackage[margin=1in]{geometry}
\usepackage{amsmath,amssymb,amsthm}
\usepackage{authblk}
\usepackage{hyperref}
\usepackage{enumitem}

\newtheorem{theorem}{Theorem}
\newtheorem{lemma}{Lemma}
\newtheorem{corollary}{Corollary}
\newtheorem{remark}{Remark}

\DeclareMathOperator{\rank}{rank}
\DeclareMathOperator{\Ker}{Ker}

\newcommand{\R}{\mathbb{R}}

\newcommand{\II}{\mathrm{II}}
\newcommand{\proj}{\operatorname{proj}}

\title{Curvature, Minimality and Uniqueness of Equilibrium}

\author[1]{Andrea Loi}
\author[2]{Stefano Matta}
\affil[1]{Dipartimento di Matematica e Informatica, Universit\`a di Cagliari, Cagliari, Italy. Email: \texttt{andrea.loi@unica.it}.}
\affil[2]{Dipartimento di Scienze economiche e Aziendali, Universit\`a di Cagliari, Cagliari, Italy. Email: \texttt{stefano.matta@unica.it}.}

\date{}

\begin{document}

\maketitle

\begin{abstract}
For a smooth pure exchange economy with fixed aggregate resources, we study two
geometric conditions on the equilibrium manifold $E(r)$ endowed with the metric
induced from its Euclidean ambient space. First, for arbitrary numbers of
commodities and consumers, we prove that intrinsic flatness forces equilibrium
prices to be locally constant. Together with Balasko's uniqueness--constancy
criterion, this yields a necessary and sufficient condition: $E(r)$ is
intrinsically flat if and only if the normalized equilibrium price is unique for
every economy with aggregate resources $r$. This extends the
curvature--uniqueness theorem of \cite{LoiMatta2018} and completes the
higher-dimensional direction pursued in \cite{LoiMattaUccheddu2023}. Second, in
the two-commodity case, we show that minimality of $E(r)$ already forces local
constancy of the price map. Under the uniform-distribution interpretation of
\cite{LoiMatta2021}, this gives the minimal-entropy/uniqueness equivalence
without the additional asymptotic assumption used there. Both arguments rely on
the same local parametrization of $E(r)$ and avoid the explicit construction of a
normal frame.
\end{abstract}

\noindent\textbf{Keywords:} equilibrium manifold; uniqueness of equilibrium; intrinsic flatness; minimal submanifold; entropy; second fundamental form.\\
\noindent\textbf{JEL classification:} C62; D50.

\section{Introduction}

The equilibrium manifold approach reveals a deep connection between the geometry of the space of economies and the economic properties of equilibria. Since the seminal work of Debreu \cite{Debreu1970} and the global formulation developed by Balasko \cite{Balasko1988}, the equilibrium manifold has provided a natural geometric framework in which prices and endowments are treated jointly. In a smooth pure exchange economy, the equilibrium manifold is not merely a set of solutions of the market-clearing equations. It is a smooth object whose shape encodes how equilibrium prices, incomes and endowments co-vary when one moves across nearby economies.

The uniqueness of equilibrium price is central for comparative statics, stability and policy interpretation. The literature has developed several approaches to global uniqueness, ranging from classical sufficient conditions and surveys \cite{MasColell1991,Kehoe1998} to more recent restrictions based on preferences, endowments and demand shapes. In two-good economies, Geanakoplos and Walsh \cite{GeanakoplosWalsh2018} identify conditions involving DARA preferences, patience types and endowment restrictions; Toda and Walsh \cite{TodaWalsh2017} provide examples of multiplicity in Edgeworth-box economies. Other recent contributions use offer curves, individual demand shapes, and HARA/CRRA structures to obtain new uniqueness conditions \cite{Gimenez2022,Won2023,Won2024,LoiMatta2023,LoiMatta2024}; see also the recent overview by Toda and Walsh \cite{TodaWalsh2024}. The present note belongs to this broad literature, but it follows the geometric route: instead of imposing a specific functional form on preferences, it studies how the intrinsic geometry of the equilibrium manifold constrains the price map.

In \cite{LoiMatta2018}, we studied the Riemannian geometry of the fixed-resource equilibrium manifold $E(r)$ endowed with the metric induced from its Euclidean ambient space. We proved that, in the case of two commodities and an arbitrary number of consumers, zero curvature of $E(r)$ implies uniqueness of equilibrium. The restriction to two commodities was geometrically natural in that proof: the equilibrium manifold is then a hypersurface, so one can work with a unit normal vector field, the associated Gauss map, and explicit curvature formulae from classical submanifold geometry \cite{Carmo1976,Carmo1992}. The final discussion in \cite{LoiMatta2018} pointed out the main obstruction to extending the result to $L>2$: the equilibrium manifold is no longer a hypersurface, and one would have to deal with a frame of normal vectors or, equivalently, with a much more involved curvature tensor system.

A complementary extrinsic route was developed in \cite{LoiMatta2021}. There, uncertainty around an equilibrium is described by a probability model on $E(r)$ and Shannon's differential entropy is computed with respect to the Riemannian volume induced by the ambient Euclidean metric. Under the uniform distribution, minimizing entropy is equivalent to minimizing volume, so the economic question becomes whether the equilibrium manifold can be a minimal submanifold in the presence of price multiplicity. That paper proved the resulting minimal-entropy/uniqueness conjecture for an arbitrary number of commodities and two consumers, and for two commodities and an arbitrary number of consumers under an additional asymptotic condition on the Gauss map. One contribution of the present note is to remove that additional condition in the two-commodity case: minimality alone already forces the price map to be locally constant.

Related geometric work on the equilibrium manifold also includes \cite{LoiMatta2011}, where the choice of metric is connected with catastrophes. A further step in this geometric program was taken in \cite{LoiMattaUccheddu2023}. That paper proposed a geometric approach to equilibrium selection under changes in endowments and, as a by-product, proved the equivalence between zero curvature and uniqueness in the case of an arbitrary number of commodities and two consumers. Thus, the two previous results covered two complementary directions: \cite{LoiMatta2018} treated arbitrary consumers but only two commodities, whereas \cite{LoiMattaUccheddu2023} treated arbitrary commodities but only two consumers. The present note provides the missing higher-dimensional argument: arbitrary numbers of commodities and consumers.

The first purpose of this note is to show that the higher-codimension obstruction can be avoided by not computing the curvature explicitly. The key observation is local and structural: the standard parametrization of the fixed-resource equilibrium manifold is affine in the free endowment variables. This implies that certain pure endowment directions have vanishing second fundamental form. When the intrinsic curvature tensor vanishes, the Gauss equation forces the corresponding mixed terms of the second fundamental form to vanish as well. The novelty is that no normal frame and no explicit curvature tensor computation are needed.

Economically, these mixed terms are the relevant ones: they contain precisely the derivatives of the equilibrium price map. Thus intrinsic flatness imposes restrictions on the local price response to admissible redistributions of endowments. The remaining possibility, namely that certain canonical endowment directions become tangent to the equilibrium manifold, is ruled out by a purely linear-algebraic tangential lemma. The conclusion is that the Jacobian of the price map is zero. Hence intrinsically flat equilibrium manifolds carry locally constant prices.

The second purpose is to revisit the minimality approach of \cite{LoiMatta2021}. In the two-commodity case, the same local parametrization allows us to test the mean curvature equation against a carefully chosen normal direction. A simple parity argument in the free endowment variables shows that a minimal equilibrium manifold cannot have a nonconstant price map. Thus the entropy/minimal-volume route and the curvature route both lead to the same economic conclusion: price multiplicity is incompatible with the relevant geometric neutrality condition.

The contribution is therefore twofold. \emph{Technically}, the proof replaces global or asymptotic arguments based on a normal vector field with local arguments driven by the parametrization of $E(r)$. \emph{Economically}, it shows that both intrinsic flatness and, in the two-commodity case, extrinsic minimality have direct global content: they rule out local price variation and therefore yield uniqueness of the normalized equilibrium price for every economy with fixed aggregate resources.

The note is organized as follows. Section \ref{sec:setup} recalls the local parametrization of the equilibrium manifold with fixed resources. Section \ref{sec:lemma} proves the tangential lemma. Section \ref{sec:main} proves the intrinsic-flatness theorem and derives the curvature--uniqueness equivalence. Section \ref{sec:minimality} proves the corresponding minimality result in the two-commodity case and relates it to the minimal-entropy approach.

\section{The local parametrization of the equilibrium manifold}\label{sec:setup}

Consider a smooth pure exchange economy with $L$ commodities and $M$ consumers, with
\[
L\geq 2, \qquad M\geq 2.
\]
We work in the standard fixed-resource smooth exchange setting: preferences satisfy
the usual smoothness assumptions, prices are normalized, aggregate resources
$r\in\R^L$ are fixed, and the fixed-resource equilibrium manifold is the smooth
submanifold of the normalized price--endowment space described in Balasko's
framework. Let
\[
q=L-1, \qquad n=M-1.
\]
Prices are normalized by setting the last price equal to one, so that the normalized price vector may be written as
\[
p=(p_1,\ldots,p_q).
\]
Let $r\in \R^L$ denote fixed aggregate resources and let $E(r)$ be the corresponding equilibrium manifold.

We use the standard local description of $E(r)$ obtained from the price-income equilibrium manifold $B(r)$ and the free endowment coordinates. Since $B(r)$ has dimension $M-1=n$, take a local parametrization
\[
t=(t_1,\ldots,t_n)\longmapsto \bigl(p(t),w_1(t),\ldots,w_n(t)\bigr),
\]
where
\[
p(t)=\bigl(p_1(t),\ldots,p_q(t)\bigr)
\]
is the equilibrium price vector and $w_i(t)$ are the corresponding incomes of the first $n=M-1$ consumers. The last consumer's endowment is determined residually by the fixed-resource constraint.

For each $i=1,\ldots,n$, let
\[
\alpha_i=(\alpha_i^1,\ldots,\alpha_i^q)\in \R^q
\]
be the first $L-1$ endowment coordinates of consumer $i$. The last coordinate of consumer $i$ is then determined by the budget identity:
\[
z_i=w_i(t)-p(t)\cdot \alpha_i.
\]
Thus, after the last consumer's endowment has been eliminated by the resource constraint, we use the following local parametrization of $E(r)$:
\begin{equation}\label{eq:parametrization}
\begin{aligned}
F(t,\alpha)=\bigl(&p(t),\alpha_1,w_1(t)-p(t)\cdot\alpha_1,\ldots,\\
&\alpha_n,w_n(t)-p(t)\cdot\alpha_n\bigr).
\end{aligned}
\end{equation}
The ambient space has dimension
\[
(L-1)+L(M-1)=LM-1,
\]
while
\[
\dim E(r)=L(M-1).
\]
Hence the codimension is $L-1=q$.

For each endowment block $i$, denote by $e_{i,a}$ the canonical vector corresponding to the coordinate $\alpha_i^a$, $a=1,\ldots,q$, and by $e_{i,L}$ the canonical vector corresponding to the last coordinate $z_i$ of the same block. Differentiating \eqref{eq:parametrization} gives
\begin{equation}\label{eq:Falpha}
F_{\alpha_i^a}=e_{i,a}-p_a(t)e_{i,L},
\end{equation}
\begin{equation}\label{eq:mixed}
F_{t_k\alpha_i^a}=-\frac{\partial p_a}{\partial t_k}(t)e_{i,L},
\end{equation}
and
\begin{equation}\label{eq:alphaalpha}
F_{\alpha_i^a\alpha_i^a}=0.
\end{equation}
On the chosen coordinate neighbourhood, $F$ is an immersion. Thus the tangent
space $T_xE(r)$ at $x=F(t,\alpha)$ is generated by the vectors
\[
F_{t_1},\ldots,F_{t_n},\qquad F_{\alpha_i^a},\quad i=1,\ldots,n,\quad a=1,\ldots,q.
\]
These vectors form a local tangent frame.

\section{The tangential lemma}\label{sec:lemma}

The following elementary lemma is the linear-algebraic core of the proof.

\begin{lemma}[Tangential lemma]\label{lem:tangential}
Let $x=F(t,\alpha)\in E(r)$. If
\[
\proj_{N_xE(r)}(e_{i,L})=0
\]
for every $i=1,\ldots,n$, then
\[
Dp(t)=0.
\]
\end{lemma}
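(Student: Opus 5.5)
The plan is to translate the hypothesis into the statement that each $e_{i,L}$ lies in $T_xE(r)$. Then I expand it in the local tangent frame of Section~\ref{sec:setup} and read off the coefficients block by block. Since $\proj_{N_xE(r)}(e_{i,L})=0$, each $e_{i,L}$ is tangent. So there are scalars $c^{(i)}_k$ and $d^{(i)}_{j,a}$ with
\[
e_{i,L}=\sum_{k=1}^n c^{(i)}_k F_{t_k}+\sum_{j=1}^n\sum_{a=1}^q d^{(i)}_{j,a}F_{\alpha_j^a}.
\]

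Next I record the components of the frame vectors. From \eqref{eq:parametrization},
\[
F_{t_k}=\Bigl(\tfrac{\partial p}{\partial t_k},\,0,\,\tfrac{\partial w_1}{\partial t_k}-\tfrac{\partial p}{\partial t_k}\cdot\alpha_1,\,\ldots,\,0,\,\tfrac{\partial w_n}{\partial t_k}-\tfrac{\partial p}{\partial t_k}\cdot\alpha_n\Bigr).
\]
This vector has zero $\alpha$-components. The vectors $F_{\alpha_j^a}$ are given by \eqref{eq:Falpha}, with a single nonzero $\alpha$-component.

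I then compare components on both sides, in three steps.
\begin{enumerate}[label=(\roman*)]
\item The $\alpha_j^a$-components. The left side $e_{i,L}$ has none, so every $d^{(i)}_{j,a}=0$.
\item The price components. The left side has zero price part, which gives $\sum_k c^{(i)}_k\,\partial p/\partial t_k=0$, i.e.\ $Dp(t)\,c^{(i)}=0$.
\item The $z_j$-components. Using (ii), the terms $-\bigl(\sum_k c^{(i)}_k\partial_k p\bigr)\cdot\alpha_j$ drop out. What remains is $\sum_k c^{(i)}_k\,\partial w_j/\partial t_k=\delta_{ij}$, i.e.\ $Dw(t)\,c^{(i)}=e_i$, where $Dw$ is the $n\times n$ Jacobian of $(w_1,\ldots,w_n)$.
\end{enumerate}

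To conclude, let $C$ be the $n\times n$ matrix with columns $c^{(1)},\ldots,c^{(n)}$. Step (iii) gives $Dw(t)\,C=I_n$, so $C$ is invertible and its columns span $\R^n$. Step (ii) says that all these columns lie in $\Ker Dp(t)$. Hence $Dp(t)=0$.

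The only delicate point is step (iii). The cancellation of the $\alpha_j$-dependent terms relies on already having $Dp\,c^{(i)}=0$ from the price block, so the components must be handled in this order. I do not expect any real obstacle beyond this bookkeeping. In particular, no immersion or rank assumption on $(p,w)$ is needed: invertibility of $Dw$ comes out of the argument rather than being assumed.
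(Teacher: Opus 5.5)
Your proof is correct and is essentially the paper's argument: the same expansion in the frame $F_{t_k},F_{\alpha_j^a}$ and the same three coordinate comparisons, except that you use $Dp(t)\,c^{(i)}=0$ to replace the paper's matrix $B_\alpha(t)$ by $Dw(t)$ and conclude from $Dw(t)\,C=I_n$, whereas the paper shows that $B_\alpha(t)|_{\Ker Dp(t)}$ is surjective and counts dimensions. One small point: your closing claim that no immersion assumption is needed is overstated, because expanding $e_{i,L}$ in that frame already requires these vectors to span $T_xE(r)$, i.e.\ that $F$ is an immersion, which the paper records in Section~2.
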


\begin{proof}
The assumption means that $e_{i,L}\in T_xE(r)$ for every $i=1,\ldots,n$. Therefore, for each $i$, there exist coefficients $\lambda_k^i$ and $\mu_j^{a,i}$ such that
\begin{equation}\label{eq:decomposition}
e_{i,L}=\sum_{k=1}^n \lambda_k^iF_{t_k}+\sum_{j=1}^n\sum_{a=1}^q \mu_j^{a,i}F_{\alpha_j^a}.
\end{equation}
We compare coordinates.

First, compare the price coordinates. The vector $e_{i,L}$ has zero price coordinates. The vectors $F_{\alpha_j^a}$ also have zero price coordinates, while the price coordinates of $F_{t_k}$ are $\partial p/\partial t_k$. Hence
\[
0=\sum_{k=1}^n \lambda_k^i\frac{\partial p}{\partial t_k}.
\]
Equivalently,
\begin{equation}\label{eq:kerDp}
Dp(t)\lambda^i=0,
\end{equation}
where $\lambda^i=(\lambda_1^i,\ldots,\lambda_n^i)^T$. Thus $\lambda^i\in \Ker Dp(t)$.

Second, compare the first $q$ coordinates of each endowment block. The vector $e_{i,L}$ has zero component in all these coordinates. The vectors $F_{t_k}$ also have zero component there. Moreover, among the vectors $F_{\alpha_j^a}$, the only one with a nonzero component in the coordinate $\alpha_j^a$ is $F_{\alpha_j^a}$ itself, and that component is one. Therefore all the coefficients $\mu_j^{a,i}$ vanish. Consequently \eqref{eq:decomposition} reduces to
\begin{equation}\label{eq:simplified}
e_{i,L}=\sum_{k=1}^n \lambda_k^iF_{t_k}.
\end{equation}

Third, compare the last coordinates of the endowment blocks. Define the $n\times n$ matrix
\begin{equation}\label{eq:Balpha}
B_\alpha(t)=\left(\frac{\partial w_m}{\partial t_k}-\frac{\partial p}{\partial t_k}\cdot \alpha_m\right)_{m,k=1}^n.
\end{equation}
The $k$-th column of $B_\alpha(t)$ is precisely the vector of last-block coordinates of $F_{t_k}$. Since the last-block coordinates of $e_{i,L}$ form the $i$-th canonical vector $e_i\in \R^n$, \eqref{eq:simplified} gives
\begin{equation}\label{eq:B-lambda}
B_\alpha(t)\lambda^i=e_i.
\end{equation}
By \eqref{eq:kerDp}, each $\lambda^i$ belongs to $\Ker Dp(t)$. Hence every canonical vector $e_i$ of $\R^n$ belongs to the image of the restricted map
\[
B_\alpha(t)|_{\Ker Dp(t)}:\Ker Dp(t)\longrightarrow \R^n.
\]
Therefore this restricted map is surjective. Hence
\[
\rank\bigl(B_\alpha(t)|_{\Ker Dp(t)}\bigr)=n.
\]
Since the rank of a linear map cannot exceed the dimension of its domain,
\[
n\leq \dim \Ker Dp(t).
\]
But $Dp(t):\R^n\to \R^q$, so $\Ker Dp(t)\subseteq \R^n$ and therefore
\[
\dim \Ker Dp(t)\leq n.
\]
It follows that
\[
\dim \Ker Dp(t)=n.
\]
Thus $\Ker Dp(t)=\R^n$, and consequently $Dp(t)=0$. In other words, if all residual
last-good directions are tangent, the price-income directions must generate all
last-block variations through vectors lying in $\Ker Dp(t)$, which is possible
only when $\Ker Dp(t)=\R^n$.
\end{proof}

\section{Intrinsic flatness and constant prices}\label{sec:main}

We now prove the main result. Let $E(r)$ be endowed with the Riemannian metric induced by the Euclidean metric of the ambient space. We say that $E(r)$ is intrinsically flat when its Riemann curvature tensor vanishes identically.

\begin{theorem}\label{thm:main}
Assume that the equilibrium manifold $E(r)$ is endowed with the induced Euclidean metric and that
\[
R^{E(r)}\equiv 0.
\]
Then the equilibrium price map $p$ is locally constant.
\end{theorem}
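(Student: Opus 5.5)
We argue pointwise on a coordinate neighbourhood where the parametrization \eqref{eq:parametrization} is an immersion, and use the Gauss equation. Write $\II$ for the second fundamental form of $E(r)$, so that $\II(X,Y)=\proj_{N_xE(r)}(\nabla^0_XY)$ for the flat ambient connection $\nabla^0$. With $R^{E(r)}\equiv 0$ the Gauss equation reads
\[
\langle \II(X,W),\II(Y,Z)\rangle-\langle \II(X,Z),\II(Y,W)\rangle=0
\]
for all tangent vectors. The plan is to feed this identity coordinate vectors for which \eqref{eq:alphaalpha} makes some terms vanish. We then conclude with Lemma~\ref{lem:tangential}.

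\textbf{Step 1.} By \eqref{eq:parametrization}, $F$ is affine in the variables $\alpha$, so $F_{\alpha_i^a\alpha_j^b}=0$ for all $i,j,a,b$, not only on the diagonal recorded in \eqref{eq:alphaalpha}. Hence $\II(F_{\alpha_i^a},F_{\alpha_j^b})=0$. By \eqref{eq:mixed},
\[
\II(F_{t_k},F_{\alpha_i^a})=-\frac{\partial p_a}{\partial t_k}\,\proj_{N}(e_{i,L}).
\]

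\textbf{Step 2.} Apply the Gauss equation with $X=W=F_{t_k}$ and $Y=Z=F_{\alpha_i^a}$. This gives
\[
\langle \II(F_{t_k},F_{t_k}),\II(F_{\alpha_i^a},F_{\alpha_i^a})\rangle-|\II(F_{t_k},F_{\alpha_i^a})|^2=0 .
\]
The first term vanishes by Step 1, so
\[
\Bigl(\frac{\partial p_a}{\partial t_k}\Bigr)^2\,|\proj_N(e_{i,L})|^2=0
\]
for all $k,a,i$. Equivalently, $K(F_{t_k},F_{\alpha_i^a})$ is minus a nonnegative quantity over a positive area term, and flatness forces it to be zero.

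\textbf{Step 3.} Fix a point $x$. If $\proj_N(e_{i,L})\neq 0$ for some $i$, then $Dp(t)=0$ at $x$ directly from Step 2. Otherwise $\proj_N(e_{i,L})=0$ for every $i$, and Lemma~\ref{lem:tangential} gives $Dp(t)=0$. In both cases $Dp$ vanishes at every point of the chart. Since these charts cover $E(r)$, $p$ is locally constant (and constant on connected components).

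\textbf{Main obstacle.} The delicate point is the case split in Step 3. The first alternative needs only a single $i$ with nonzero normal projection, because the coefficient $\partial p_a/\partial t_k$ does not depend on $i$. That is exactly why the lemma is stated with the hypothesis holding for all $i$ simultaneously. I would also check carefully that the mixed second derivative \eqref{eq:mixed} is the full ambient derivative, with no tangential correction affecting the normal part, and that the Gauss equation sign convention gives the squared-norm form above.
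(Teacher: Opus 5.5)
Your proposal is correct and follows essentially the same route as the paper. The Gauss equation on the plane spanned by $F_{t_k}$ and $F_{\alpha_i^a}$, together with $\II(F_{\alpha_i^a},F_{\alpha_i^a})=0$, forces $\frac{\partial p_a}{\partial t_k}\proj_{N_xE(r)}(e_{i,L})=0$, and the tangential lemma handles the remaining case in which every $e_{i,L}$ is tangent; your explicit case split in Step 3 is just the direct form of the paper's argument by contradiction.
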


\begin{proof}
Since the ambient space is Euclidean, the second fundamental form is obtained by projecting second derivatives onto the normal space:
\begin{equation}\label{eq:II-projection}
\II(F_u,F_v)=\proj_{N_xE(r)}(F_{uv}).
\end{equation}
By \eqref{eq:alphaalpha},
\[
\II(F_{\alpha_i^a},F_{\alpha_i^a})=0.
\]
Moreover, by \eqref{eq:mixed} and \eqref{eq:II-projection},
\begin{equation}\label{eq:II-mixed}
\II(F_{t_k},F_{\alpha_i^a})=-\frac{\partial p_a}{\partial t_k}\proj_{N_xE(r)}(e_{i,L}).
\end{equation}

Apply the Gauss equation with
\[
X=F_{\alpha_i^a},\qquad Y=F_{t_k}.
\]
Since the ambient curvature is zero,
\begin{equation}\label{eq:gauss}
\langle R^{E(r)}(X,Y)Y,X\rangle
=
\langle \II(X,X),\II(Y,Y)\rangle-\|\II(X,Y)\|^2.
\end{equation}
Thus, in the present Euclidean ambient space, the intrinsic curvature of $E(r)$ is
entirely determined by the second fundamental form.
By assumption, $R^{E(r)}\equiv 0$, and by the previous paragraph $\II(X,X)=0$. Hence \eqref{eq:gauss} gives
\[
\|\II(F_{\alpha_i^a},F_{t_k})\|^2=0.
\]
Thus
\[
\II(F_{\alpha_i^a},F_{t_k})=0.
\]
Using \eqref{eq:II-mixed}, we obtain
\begin{equation}\label{eq:master}
\frac{\partial p_a}{\partial t_k}\proj_{N_xE(r)}(e_{i,L})=0
\end{equation}
for every $i=1,\ldots,n$, $a=1,\ldots,q$ and $k=1,\ldots,n$.

Suppose, by contradiction, that $Dp(t)\neq 0$. Then there exist $a_0$ and $k_0$ such that
\[
\frac{\partial p_{a_0}}{\partial t_{k_0}}\neq 0.
\]
Taking $a=a_0$ and $k=k_0$ in \eqref{eq:master}, we get
\[
\proj_{N_xE(r)}(e_{i,L})=0
\]
for every $i=1,\ldots,n$. By Lemma \ref{lem:tangential}, this implies $Dp(t)=0$, contradicting the assumption. Therefore $Dp(t)=0$.
Since this holds in local coordinates, the price map $p$ is locally constant.
\end{proof}

We shall use Balasko's result that, in the fixed-resource setting, uniqueness of
the normalized equilibrium price for every economy implies that the equilibrium
price does not depend on the endowment profile. Conversely, if the price
projection is constant on $E(r)$, uniqueness of the normalized equilibrium price
is immediate. Together, these observations give the uniqueness--constancy
criterion used below.

\begin{corollary}
In the standard fixed-resource smooth exchange setting, the following conditions
are equivalent:
\begin{enumerate}
\item the equilibrium manifold \(E(r)\), endowed with the metric induced by the
Euclidean ambient space, is intrinsically flat, i.e. \(R^{E(r)}\equiv 0\);
\item the normalized equilibrium price associated with each economy
\(\omega\in\Omega(r)\) with aggregate resources \(r\) is unique.
\end{enumerate}
\end{corollary}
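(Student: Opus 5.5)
The equivalence follows by joining Theorem \ref{thm:main} to Balasko's uniqueness--constancy criterion recalled just above. I would treat the two implications separately.

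For (1)\(\Rightarrow\)(2), assume \(R^{E(r)}\equiv 0\). Theorem \ref{thm:main} then gives \(Dp=0\) in every coordinate chart \(F(t,\alpha)\). Hence the price projection \(\pi: E(r)\to S\), \((p,\omega)\mapsto p\), is locally constant.

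To make it globally constant I would use connectedness of \(E(r)\). In Balasko's framework \(E(r)\) is diffeomorphic to \(B(r)\times\R^{q n}\). Moreover, \(B(r)\) is connected: under the standard assumptions it is parametrized by a connected set, namely prices together with an income distribution summing to \(p\cdot r\). A locally constant map on a connected manifold is constant, so \(\pi\equiv p^*\).

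Now fix any economy \(\omega\in\Omega(r)\). Every equilibrium price of \(\omega\) is the image under \(\pi\) of a point \((p,\omega)\in E(r)\), so it equals \(p^*\). Existence of equilibrium is standard, so the normalized equilibrium price of \(\omega\) is unique.

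For (2)\(\Rightarrow\)(1), assume uniqueness holds for every \(\omega\in\Omega(r)\). Balasko's result then gives that the equilibrium price is independent of \(\omega\), i.e. \(p(t)\equiv p^*\) is constant. In the parametrization \eqref{eq:parametrization} this gives
\[
F(t,\alpha)=\bigl(p^*,\alpha_1,w_1(t)-p^*\cdot\alpha_1,\ldots,\alpha_n,w_n(t)-p^*\cdot\alpha_n\bigr).
\]

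This immersion is affine in \(\alpha\). Its dependence on \(t\) is only through the curve-like block \((w_1(t),\ldots,w_n(t))\) in the last coordinates.

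Moreover, with \(p\) fixed the budget identity is linear. So \(B(r)\) becomes the set of income vectors \(w\) with \(\sum w_m\) fixed, after eliminating the last consumer. That is an affine subset of dimension \(n\). Reparametrizing by \(w\) itself as the coordinate \(t\), the map \(F\) is affine in \((w,\alpha)\). Therefore \(E(r)\) is an open subset of an affine subspace of the ambient Euclidean space, hence totally geodesic with \(\II\equiv 0\). The Gauss equation \eqref{eq:gauss} then yields \(R^{E(r)}\equiv 0\).

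The main obstacle is the step in (2)\(\Rightarrow\)(1) showing that constant price really makes \(B(r)\) flat. One must check that the income coordinates \(w_m\) can serve as local coordinates on \(B(r)\) and that the fixed-resource constraint is affine in them. This holds because, at fixed \(p^*\), equilibrium forces \(\sum_{m} w_m+w_M=p^*\cdot r\) with no other restriction on \(B(r)\). A secondary point is the connectedness of \(E(r)\), needed to globalize local constancy in (1)\(\Rightarrow\)(2). I would cite Balasko's diffeomorphism of \(E(r)\) with a Euclidean space for this.
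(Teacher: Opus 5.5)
Your proposal is correct and follows the paper's proof. For (1)$\Rightarrow$(2) you combine Theorem~\ref{thm:main} with connectedness of $E(r)$ via Balasko's global diffeomorphism with a Euclidean space; for (2)$\Rightarrow$(1) you use Balasko's uniqueness--constancy theorem to see $E(r)$ as an open piece of the affine subspace $\{p^*\}\times\Omega(r)$, hence flat. One caveat: your heuristic that $B(r)$ is parametrized by prices together with incomes summing to $p\cdot r$ is wrong on dimensional grounds, since that set has dimension $q+n$ while $\dim B(r)=n$. Likewise, at fixed $p^*$ a dimension count only shows that $B(r)$ is an open subset of the hyperplane $\sum_m w_m=p^*\cdot r$, not that there is ``no other restriction''; still, the Balasko citation you fall back on and this openness are all the argument needs.
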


\begin{proof}
Assume first that \(R^{E(r)}\equiv 0\). By Theorem~\ref{thm:main}, the equilibrium price map is
locally constant in every local parametrization of \(E(r)\). By the standard global parametrization of the fixed-resource equilibrium manifold,
\(E(r)\) is diffeomorphic to a Euclidean space and hence connected. Hence the price map is constant on \(E(r)\). Let
\(\omega\in\Omega(r)\), and suppose that \(p\) and \(\widetilde p\) are two
normalized equilibrium prices associated with \(\omega\). Then
\[
(p,\omega)\in E(r),
\qquad
(\widetilde p,\omega)\in E(r).
\]
Since the price map is constant on \(E(r)\), both prices must coincide with the
same constant vector. Therefore \(p=\widetilde p\), and the
normalized equilibrium price is unique for every economy with fixed total resources \(r\).

\noindent Conversely, suppose that the normalized equilibrium price is unique for every
\(\omega\in\Omega(r)\) with aggregate resources \(r\). By Balasko's theorem \cite[Theorem 7.3.9]{Balasko1988}, the equilibrium price associated
with \(\omega\) does not depend on \(\omega\). Thus the equilibrium manifold has
constant price component and is an affine submanifold parallel to the
endowment space. Its induced Riemannian curvature is therefore zero.
\end{proof}

\begin{remark}
The proof of Theorem~\ref{thm:main} does not require the explicit construction
of a normal frame. This is the key difference with the codimension-one
Gauss-map argument used in \cite{LoiMatta2018}. In higher codimension, the
second fundamental form is normal-vector-valued, and an explicit curvature
tensor computation would be considerably less tractable. The argument above
uses only those components of the second fundamental form that are forced by the
affine dependence of the parametrization on the endowment variables, together
with the Gauss equation and the tangential lemma.
\end{remark}

\begin{remark}
The mixed terms \(\II(F_{t_k},F_{\alpha_i^a})\) that intrinsic flatness forces
to vanish admit a direct economic reading. They couple a variation along the
price-income manifold \(B(r)\) with a redistribution of endowments, and by
\eqref{eq:II-mixed} their projection onto the normal bundle scales with the
price derivatives \(\partial p_a/\partial t_k\). Their vanishing is therefore
the geometric mechanism through which intrinsic flatness rules out local price
responses to admissible redistributions of endowments, which is the content of
the constancy condition underlying Balasko's uniqueness criterion.
\end{remark}

\section{Minimality and constant prices when \texorpdfstring{$L=2$}{L=2}}\label{sec:minimality}

We now record a complementary extrinsic result. In \cite{LoiMatta2021}, under a
uniform distribution on a neighborhood of the equilibrium manifold, the minimal
entropy property is equivalent to a minimal volume property. Geometrically, this
leads to the requirement that $E(r)$ be a minimal submanifold of its Euclidean
ambient space. The result below shows that, when $L=2$, minimality alone already
forces the price map to be locally constant.

For this section only, assume $L=2$. Then $q=L-1=1$, and the normalized price is
a scalar function $p=p(t)$, with $t\in\R^n$ and $n=M-1$. The local parametrization
becomes
\begin{equation}\label{eq:minimal-param}
F(t,\alpha)=
\bigl(p(t),\alpha_1,w_1(t)-p(t)\alpha_1,\ldots,
\alpha_n,w_n(t)-p(t)\alpha_n\bigr).
\end{equation}
Recall that an immersion is minimal if its mean curvature vector vanishes
identically. Equivalently, for every normal vector $N$ one has
\begin{equation}\label{eq:minimal-equation}
\sum_{A,B}g^{AB}\langle F_{AB},N\rangle=0,
\end{equation}
where $(g^{AB})$ is the inverse of the induced metric matrix.

\begin{theorem}\label{thm:minimal}
Assume \(L=2\). If the immersion \eqref{eq:minimal-param} is minimal with respect
to the Euclidean metric induced from the ambient space, then the equilibrium
price map \(p\) is locally constant.
\end{theorem}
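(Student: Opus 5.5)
The plan is to argue by contradiction at a point $t_0$ with $Dp(t_0)\neq 0$. By continuity $Dp\neq 0$ on a neighbourhood $U$ of $t_0$. The idea is to write the minimal surface equation \eqref{eq:minimal-equation} as a polynomial identity in the free endowment variables $\alpha=(\alpha_1,\ldots,\alpha_n)$, with coefficients depending only on $t$. Since the identity must hold for all $\alpha$ in an open set, and hence for all $\alpha\in\R^n$, every coefficient must vanish. A leading-order coefficient should then be a nonzero multiple of a quadratic expression in $Dp$.

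\textbf{Step 1 (normal vector).} Since $L=2$, $E(r)$ is a hypersurface, so one normal direction suffices, and no normal frame is needed. Write $N=(\nu,\beta_1,\gamma_1,\ldots,\beta_n,\gamma_n)$. Orthogonality to $F_{\alpha_i}=e_{i,1}-p\,e_{i,2}$ gives $\beta_i=p\gamma_i$. Orthogonality to $F_{t_k}$ gives
\[
\nu\,\partial_kp+\sum_i\gamma_i\bigl(\partial_kw_i-\partial_kp\,\alpha_i\bigr)=0,\qquad k=1,\ldots,n .
\]
I will pick an explicit, non-normalized solution with polynomial dependence on $\alpha$. The natural choice is the cofactor (generalized cross product) of the tangent frame, which is polynomial in $\alpha$. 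Normalization is irrelevant, because \eqref{eq:minimal-equation} is linear in $N$.

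\textbf{Step 2 (the equation as a polynomial).} The second derivatives are
\[
F_{\alpha_i\alpha_j}=0,\qquad F_{t_k\alpha_i}=-\partial_kp\,e_{i,2},
\]
\[
F_{t_kt_l}=\bigl(\partial_{kl}p,0,\partial_{kl}w_1-\partial_{kl}p\,\alpha_1,\ldots\bigr).
\]
The metric has the following structure:
\begin{itemize}
\item $g_{\alpha_i\alpha_j}=(1+p^2)\delta_{ij}$, which is constant in $\alpha$;
\item $g_{t_k\alpha_i}=-p(\partial_kw_i-\partial_kp\,\alpha_i)$, which is affine in $\alpha$;
\item $g_{t_kt_l}$, which is quadratic in $\alpha$.
\end{itemize}
Multiplying \eqref{eq:minimal-equation} by $\det g$ turns $g^{AB}$ into the adjugate. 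The equation then becomes $P_t(\alpha)=0$ for a polynomial $P_t$ in $\alpha$ whose coefficients are built from $p,w$ and their first and second derivatives at $t$.

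\textbf{Step 3 (leading order, the main obstacle).} I will restrict to rays $\alpha=s\xi$ with $\xi\in\R^n$ fixed and extract the top power of $s$. At leading order the $t$-tangent vectors behave like $-s\,(\partial_kp)\,\xi$ in the last coordinates. I expect the top coefficient to reduce, after the $\alpha$-block is eliminated by a Schur complement, to a term of the form $c(\xi)\,\langle \xi,\cdot\rangle$ times a quadratic form in $Dp$, such as $|Dp|^2$ weighted by positive factors in $p$. Requiring it to vanish for a generic $\xi$ then forces $Dp=0$ on $U$, which is the desired contradiction.

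The delicate point is that cancellations could kill the naive top-degree term. If that happens, I would instead use the parity argument suggested in the introduction. This means splitting $P_t(\alpha)$ into its even and odd parts under $\alpha\mapsto-\alpha$, with $t$ fixed, and testing the odd part against the contributions of the mixed terms $F_{t_k\alpha_i}$. The reason is that, because the parametrization is affine in $\alpha$, the mixed terms are the only second derivatives carrying $Dp$ without an $\alpha$ factor. I would carry out both computations first in the case $n=1$ (so $M=2$), to identify which coefficient is cleanly proportional to $(\partial p)^2$, and then generalize.
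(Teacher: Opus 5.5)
\textbf{There is a genuine gap at Step 3.} Your framework is legitimate: after clearing denominators, minimality is a polynomial identity $P_t(\alpha)=0$ in the free endowments, and all its coefficients must vanish. But Step 3 carries the entire proof, and it is only an expectation. For $M\ge 3$ that expectation is false.

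First, $P_t$ has degree at most two in $\alpha$. Reparametrize only $t$, as $t=(x,s)$ with $x=p$ as the paper does; this multiplies $P_t$ by a nonzero function of $t$. Write $\beta=W_x-\alpha=\tau y+\sum_r c_rW_{s_r}$, with $y\perp W_{s_1},\ldots,W_{s_{n-1}}$ and $Y=|y|^2$. Then $P_t$ is a nonzero multiple of
\[
(1+x^2)\bigl(h_{xx}-2c\cdot h_{xs}+c^{T}h_{ss}c\bigr)-2x\tau Y+C\,(1+x^2+\tau^2Y),
\]
where $h_{xx}=W_{xx}\cdot y$, $h_{xs_r}=W_{xs_r}\cdot y$, $h_{s_rs_l}=W_{s_rs_l}\cdot y$, and $C$ depends on $t$ only.

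So the leading (quadratic) part in $\alpha$ is built from second derivatives of $w$ along the level sets of $p$, not from $Dp$. Its vanishing only gives $h_{ss}=0$ and $C=0$. It can even vanish identically while $Dp\neq0$: take $n=2$ and $W(x,s)=(s,f(x))$, where the immersion is $\R^2$ times a surface in $\R^3$. For $M=2$ your plan does work, but the top coefficient is then $2p\,(p')^3$, not a quadratic form in $Dp$.

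\textbf{Where the contradiction actually comes from.} The obstruction sits one degree lower and in a single direction. The coefficient of $\tau$, the component of $\beta$ along $y$, is $-2xY=-2p\,|y|^2$. It is proportional to the price level itself. Hence positivity of prices (or at least $p\not\equiv 0$ on an open set) is indispensable, and your plan never invokes it.

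Your parity fallback does not reach this term as stated, because the reflection $\alpha\mapsto-\alpha$ at fixed $t$ is centred at the wrong point. The odd part of $P_t$ then also contains cross terms from the quadratic part, for instance $-2C\,(W_x\cdot y)(\alpha\cdot y)/Y$. Along $\alpha\parallel y$ it therefore only yields $xY=C\,(W_x\cdot y)$, which is not a contradiction.

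You have two ways to close the argument:
\begin{enumerate}
\item First extract $C=0$ from the quadratic coefficient, then read off $xY=0$ from the linear one.
\item Follow the paper and centre the reflection at $\beta=0$ by moving along $\alpha=W_x-\tau y$. There the metric splits into the $\tau$-dependent block spanned by $F_x,F_\eta$ and a $\tau$-independent complement. The only odd contribution is then $2g^{x\eta}\langle\II(F_x,F_\eta),N\rangle=-2x\tau Y/(1+x^2+\tau^2Y)$.
\end{enumerate}
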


\begin{proof}
Suppose, by contradiction, that \(Dp(t_0)\neq 0\) at some point \(t_0\).
Since \(p\) is scalar, the implicit function theorem allows us, after
restricting to a sufficiently small neighbourhood of \(t_0\), to use
\[
x=p(t)
\]
as one of the local coordinates. We write the remaining coordinates as
\[
s=(s_1,\ldots,s_{n-1}),
\]
so that \(t=(x,s)\). Put
\[
w_i(t)=W_i(x,s),
\qquad i=1,\ldots,n.
\]
Then the local parametrization becomes
\[
F(x,s,\alpha)=
\bigl(x,\alpha_1,W_1(x,s)-x\alpha_1,\ldots,
\alpha_n,W_n(x,s)-x\alpha_n\bigr).
\]
Set
\[
\beta_i=W_{i,x}-\alpha_i,
\qquad
\beta=(\beta_1,\ldots,\beta_n).
\]
The tangent vectors are
\[
F_x=(1,0,\beta),
\]
\[
F_{s_r}=(0,0,W_{s_r}),
\qquad r=1,\ldots,n-1,
\]
and
\[
F_{\alpha_i}=(0,e_i,-x e_i),
\qquad i=1,\ldots,n.
\]
Here \(e_i\) denotes the \(i\)-th canonical vector of \(\mathbb R^n\), and
\[
W_{s_r}
=
\left(
\frac{\partial W_1}{\partial s_r},
\ldots,
\frac{\partial W_n}{\partial s_r}
\right).
\]

Since \((x,s)\) are local coordinates on the price-income manifold, the tangent
vectors
\[
F_{s_1},\ldots,F_{s_{n-1}}
\]
are linearly independent. As
\[
F_{s_r}=(0,0,W_{s_r}),
\]
it follows that
\[
W_{s_1},\ldots,W_{s_{n-1}}
\]
are linearly independent in \(\mathbb R^n\). Hence their orthogonal complement
in \(\mathbb R^n\) is one-dimensional. Choose a nonzero vector
\(y\in\mathbb R^n\) such that
\[
y\cdot W_{s_r}=0,
\qquad r=1,\ldots,n-1.
\]
When \(n=1\), this condition is void.

Define
\[
N=(-\beta\cdot y,xy,y).
\]
Then \(N\) is normal to the immersion. Indeed,
\[
\langle N,F_{\alpha_i}\rangle
=
xy_i-xy_i=0,
\]
\[
\langle N,F_{s_r}\rangle
=
y\cdot W_{s_r}=0,
\]
and
\[
\langle N,F_x\rangle
=
-\beta\cdot y+\beta\cdot y=0.
\]

The endowment variables \(\alpha_i\) are free local coordinates. Hence, after
possibly restricting the coordinate neighbourhood, we may vary them along the
line
\[
\alpha_i=W_{i,x}-\tau y_i,
\qquad i=1,\ldots,n,
\]
for all \(\tau\) in some nonempty open interval \(I\subset\mathbb R\).
Equivalently, along this line one has
\[
\beta=\tau y.
\]
At these points,
\[
N_\tau=(-\tau |y|^2,xy,y).
\]
Set
\[
Y=|y|^2>0.
\]
Introduce the endowment direction
\[
\partial_\eta=\sum_{i=1}^n y_i\partial_{\alpha_i}.
\]
Then
\[
F_\eta
=
\sum_{i=1}^n y_iF_{\alpha_i}
=
(0,y,-xy).
\]

We now compute the part of the induced metric on the two-dimensional subspace
generated by \(F_x\) and \(F_\eta\). Since \(\beta=\tau y\), we have
\[
F_x=(1,0,\tau y).
\]
Thus
\[
\langle F_x,F_x\rangle=1+\tau^2Y,
\]
\[
\langle F_x,F_\eta\rangle=-x\tau Y,
\]
and
\[
\langle F_\eta,F_\eta\rangle=(1+x^2)Y.
\]
Hence the corresponding \(2\times 2\) block is
\[
G_0(\tau)=
\begin{pmatrix}
1+\tau^2Y & -x\tau Y\\
-x\tau Y & (1+x^2)Y
\end{pmatrix}.
\]
Its determinant is
\[
\det G_0(\tau)=Y(1+x^2+\tau^2Y).
\]

Since the mean curvature vector is the trace of the second fundamental form, the
scalar equation
\[
\langle H,N_\tau\rangle=0
\]
may be evaluated in any tangent basis at the point under consideration. We shall
therefore replace the coordinate basis in the \(\alpha\)-directions by a basis
adapted to the orthogonal decomposition
\[
\mathbb R^n=\mathbb R y\oplus y^\perp.
\]
Moreover, since \(N_\tau\) is normal, if \(E_A,E_B\) are locally extended as
tangent vector fields, then
\[
\langle \II(E_A,E_B),N_\tau\rangle
=
\langle D_{E_A}E_B,N_\tau\rangle,
\]
where \(D\) is the Euclidean connection. In what follows, the fields
\(F_\eta\) and \(F_{v_\ell}\), \(\ell=2,\ldots,n\), are obtained from the fields
\(F_{\alpha_i}\) by taking constant linear combinations at the point under
consideration. Therefore the Euclidean derivatives displayed below compute the
corresponding second-fundamental-form coefficients after pairing with
\(N_\tau\).

Choose an orthonormal basis
\[
v_2,\ldots,v_n
\]
of \(y^\perp\). For \(\ell=2,\ldots,n\), set
\[
F_{v_\ell}
=
\sum_{i=1}^n (v_\ell)_i F_{\alpha_i}
=
(0,v_\ell,-xv_\ell).
\]
Since \(v_\ell\cdot y=0\), one has
\[
\langle F_x,F_{v_\ell}\rangle
=
-x\tau\, y\cdot v_\ell
=
0,
\]
and
\[
\langle F_\eta,F_{v_\ell}\rangle
=
(1+x^2)y\cdot v_\ell
=
0.
\]
Moreover, since \(y\cdot W_{s_r}=0\),
\[
\langle F_x,F_{s_r}\rangle
=
\tau y\cdot W_{s_r}
=
0,
\]
and
\[
\langle F_\eta,F_{s_r}\rangle
=
-x y\cdot W_{s_r}
=
0.
\]
Thus, at the point under consideration and along the line \(\beta=\tau y\), the
plane generated by \(F_x\) and \(F_\eta\) is orthogonal to the directions
generated by \(F_{s_r}\) and by \(F_{v_\ell}\), \(\ell=2,\ldots,n\). Therefore
the metric is block diagonal with respect to the splitting
\[
\operatorname{span}\{F_x,F_\eta\}
\oplus
\operatorname{span}\{F_{s_1},\ldots,F_{s_{n-1}},F_{v_2},\ldots,F_{v_n}\}.
\]
Consequently the inverse metric also splits. In particular, the coefficient of
the full inverse metric corresponding to the pair \((x,\eta)\) is the mixed
coefficient of \(G_0(\tau)^{-1}\), namely
\[
g^{x\eta}
=
\frac{x\tau}{1+x^2+\tau^2Y}.
\]

Next,
\[
F_{x\eta}
=
\sum_{i=1}^n y_iF_{x\alpha_i}
=
-\sum_{i=1}^n y_i e_{i,L}.
\]
Hence
\[
\langle \II(F_x,F_\eta),N_\tau\rangle
=
\langle F_{x\eta},N_\tau\rangle
=
-\sum_{i=1}^n y_i^2
=
-Y.
\]
Thus the mixed contribution of the pair \((x,\eta)\) to the scalar minimality
equation is
\[
2g^{x\eta}\langle \II(F_x,F_\eta),N_\tau\rangle
=
-\frac{2x\tau Y}{1+x^2+\tau^2Y}.
\]

We now check that no other term contributes an odd part in \(\tau\). Since the
immersion is affine in the endowment variables,
\[
D_{F_\eta}F_\eta=0,
\qquad
D_{F_\eta}F_{v_\ell}=0,
\qquad
D_{F_{v_\ell}}F_{v_m}=0
\]
for all \(\ell,m=2,\ldots,n\). Thus second fundamental form terms involving only
endowment directions give no contribution after pairing with \(N_\tau\).

Moreover,
\[
D_{F_x}F_{v_\ell}
=
-\sum_{i=1}^n (v_\ell)_i e_{i,L}.
\]
Therefore
\[
\langle \II(F_x,F_{v_\ell}),N_\tau\rangle
=
\left\langle D_{F_x}F_{v_\ell},N_\tau\right\rangle
=
-\sum_{i=1}^n (v_\ell)_i y_i
=
-v_\ell\cdot y
=
0
\]
for every \(\ell=2,\ldots,n\). Hence all mixed terms involving \(F_x\) and a
transverse endowment direction \(F_{v_\ell}\) vanish.

Similarly,
\[
D_{F_{s_r}}F_\eta=0,
\qquad
D_{F_{s_r}}F_{v_\ell}=0,
\]
because
\[
F_{s_r}=(0,0,W_{s_r})
\]
is independent of the endowment variables.

It remains to consider the second fundamental form terms involving only the
directions \(F_x,F_{s_1},\ldots,F_{s_{n-1}}\), namely the terms coming from
\[
F_{xx},\qquad F_{xs_r},\qquad F_{s_rs_l}.
\]
Their first, that is price, component is zero, because \(x\) itself is the first
coordinate of the parametrization. Since
\[
N_\tau=(-\tau Y,xy,y)
\]
depends on \(\tau\) only through its first component, it follows that
\[
\langle F_{xx},N_\tau\rangle,
\qquad
\langle F_{xs_r},N_\tau\rangle,
\qquad
\langle F_{s_rs_l},N_\tau\rangle
\]
are independent of \(\tau\).

Their inverse metric coefficients belong either to the \((x,\eta)\)-block or to
the complementary block. In the \((x,\eta)\)-block, the diagonal coefficients of
\(G_0(\tau)^{-1}\) are even functions of \(\tau\). The complementary block is
independent of \(\tau\), because it is generated by \(F_{s_r}\) and by
\(F_{v_\ell}\), none of which depends on \(\tau\). Therefore these remaining
terms are even functions of \(\tau\).

Consequently, in the scalar minimality equation
\[
\langle H,N_\tau\rangle
=
\sum_{A,B}g^{AB}
\langle \II(E_A,E_B),N_\tau\rangle
=
0,
\]
computed in the adapted tangent basis
\[
E_A\in
\{F_x,F_\eta,F_{s_1},\ldots,F_{s_{n-1}},F_{v_2},\ldots,F_{v_n}\},
\]
the odd part is precisely
\[
-\frac{2x\tau Y}{1+x^2+\tau^2Y}.
\]
Since the immersion is minimal, this equation holds for every \(\tau\in I\).
Moreover, the left-hand side is a rational function of \(\tau\). Hence it
vanishes identically as a rational function, and therefore its odd part must
vanish identically. Thus
\[
-\frac{2x\tau Y}{1+x^2+\tau^2Y}=0
\qquad
\text{for every }\tau.
\]
Because \(Y>0\), this forces
\[
x=0.
\]
But \(x=p(t)\) is a normalized equilibrium price, and prices are strictly
positive. This contradiction proves that no point \(t_0\) with
\(Dp(t_0)\neq 0\) can exist. Hence
\[
Dp=0
\]
on the coordinate neighbourhood. Therefore the equilibrium price map is locally
constant.
\end{proof}

\begin{remark}
Theorem~\ref{thm:minimal} strengthens the two-commodity part of
\cite{LoiMatta2021}. The earlier argument required an additional assumption on
the Gauss map outside a compact set. The proof above is purely local: it uses
only the standard parametrization of \(E(r)\), the freedom of the endowment
coordinates, and the positivity of normalized prices.
\end{remark}

\end{document}